\tikzset{->-/.style={decoration={markings, mark=at position .5 with {\arrow{>}}}, postaction={decorate}}}
\renewcommand\epsilon\varepsilon
\newcommand\id{\mathrm{id}} 
\renewcommand\P{\mathbf P} 
\newcommand\C{\mathbf C} 
\newcommand\E{\mathbf E} 
\newcommand\argmax{\arg\max}
\newcommand\fix{\mathrm{fix}}
\newcommand\R{\mathbb R}
\newcommand\bangrot{\text{\textexclamdown}} 
\newcommand\G{\mathcal G}
\renewcommand\H{\mathcal H}
\renewcommand\Game{\mathbf{Game}}
\title{The algebra of predicting agents}
\author{Joe Bolt \and Jules Hedges \and Viktor Winschel}
\institute{University of Oxford}
\begin{document}

\maketitle


\begin{abstract}
	The category of open games, which provides a
	strongly compositional foundation of economic game theory, is
	intermediate between symmetric monoidal and compact closed. More precisely it
	has counits with no corresponding units, and a partially defined
	duality. There exist open games with the same types as unit maps,
	given by agents with the strategic goal of predicting a future
	value. Such agents appear in earlier work on selection functions.
	We explore the algebraic properties of these agents via the symmetric monoidal bicategory whose 2-cells are morphisms between open games, and show how
	the resulting structure approximates a compact closed category with
	a family of lax commutative bialgebras.
\end{abstract}

\section{Introduction}

Open games \cite{hedges15b} provide a strongly compositional foundation to economic game theory.
In this paper we continue the investigation of the categorical structure of open games.

An open game is, in general, a fragment of a game that can be embedded in a context of an appropriate type.
Open games are the morphisms of a symmetric monoidal
category and, hence, admit categorical composition and monoidal product
operators. These represent sequential and simultaneous play
respectively.
Games are built compositionally, beginning with simple `atomic' open games
such as individual decisions and payoff functions, using the composition
operators.
This is closely related to the categorical open systems research programme \cite{fong_algebra_open_interconnected_systems}.

A related line of work has investigated an approach to game theory based on
\emph{selection functions} (see for example \cite{escardo10a,escardo11}).
These come in two flavours: \emph{single-valued} selection functions have a type of the form $(X \to R) \to X$, and \emph{multi-valued} selection functions $(X \to R) \to \mathcal P (X)$.
These allow the replacement of the $\argmax$ operator, which is itself a selection function of type $(X \to \mathbb R) \to \mathcal P (X)$, with other operators
with similar types.
In particular, a worked example in \cite{hedges_etal_selection_equilibria_higher_order_games} considers games in which $\argmax$ is replaced by a fixpoint operator of type $(X \to X) \to \mathcal P (X)$, defined by $\fix (k) = \{ x \mid x = k (x) \}$.
This provides an elegant toy model of \emph{coordination} as a strategic aim.

Any selection function can be viewed as an open game, representing a single decision made by an agent whose `rationality' is defined by that selection function.
In particular, the $\argmax$ selection function defines a `classically rational' agent, which is one of the atomic open games.
In this paper we consider the agents that arise from the fixpoint selection function.
The strategic goal of these agents is to correctly predict a future value.

The category of open games admits a `partial duality' $-^*$ that is defined
on all objects but on few morphisms, and there are `counit' open games
$\epsilon_X : X \otimes X^* \to I$ that are compatible with this duality
whenever it is defined.
In this sense, the categorical structure of open games is similar to a
`fragment' of compact closed structure \cite{selinger11}.
The fixpoint agent has the type $\eta_X : I \to X \otimes X^*$, and due to its interpretation as a past prediction of a future value, is a natural candidate to be the dual `unit'.

Unfortunately we do not obtain a compact closed category.
We investigate what structure we do obtain, using the symmetric monoidal bicategory whose 2-cells are globular morphisms of open games, obtained as a special case of more general morphisms in \cite{ghani_kupke_lambert_forsberg_compositional_treatment_iterated_open_games,hedges_morphisms_open_games}.
We also investigate a family of monoids and comonoids in the category of open games that are closely related to the fixpoint agent, which similarly narrowly fails to define a family of commutative bialgebras.

Both of these structures fail in precisely the same way: Morphisms which `should' equal the identity are instead equal to a certain morphism that is not related to the identity by a 2-cell.
This morphism is an open game that has the same behaviour as the identity open game \emph{in equilibrium}, but can behave differently in general.


\section{Open games}

In this section we provide a theoretically self-contained definition of open games.
However for reasons of space we find it necessary to refer the reader to \cite{hedges15b} or \cite{hedges_towards_compositional_game_theory} for motivation and further details, including the links between open games and classical game theory.

\begin{definition}
	Let $X, S, Y, R$ be sets.
	An open game $\G : (X, S) \to (Y, R)$ is defined by the following data:
	\begin{itemize}
		\item A set $\Sigma (\G)$ of \emph{strategy profiles}
		\item A \emph{play function} $\P_\G : \Sigma (\G) \times X \to Y$
		\item A \emph{coplay function} $\C_\G : \Sigma (\G) \times X \times R \to S$
		\item An \emph{equilibrium function} $\E_\G : X \times (Y
		\to R) \to \mathcal P (\Sigma (\G))$ 
	\end{itemize}
\end{definition}

A pair $(x, k) : X \times (Y \to R)$ is called a \emph{context} for $\G$, and a strategy profile $\sigma \in \E_\G (x, k)$ is called a \emph{Nash equilibrium in context} $(x, k)$.
Intuitively, a context contains the behaviour of a game's environment: $x$ says what happened in the past (that is relevant to $\G)$, and $k$ says what will happen in the future (that is relevant to $\G$) given $\G$'s behaviour.

A general open game $\G : (X, S) \to (Y, R)$ is denoted by a string diagram of the form
\begin{center} \begin{tikzpicture}
	\node (X) at (-2, .5) {$X$}; \node (Y) at (2, .5) {$Y$}; \node (R) at (2, -.5) {$R$}; \node (S) at (-2, -.5) {$S$};
	\node [rectangle, minimum height=1.5cm, minimum width=.75cm, draw] (G) at (0, 0) {$\G$};
	\draw [->-] (X) to (G.west |- X); \draw [->-] (G.east |- Y) to (Y); \draw [->-] (R) to (G.east |- R); \draw [->-] (G.west |- S) to (S);
\end{tikzpicture} \end{center}
In the string diagram language, a set on the left of a pair always labels a wire pointing forwards, and a set on the right of a pair always labels a wire pointing backwards.

Note that in previous work, open games $\G : (X, S) \to (Y, R)$ are defined to have a more general \emph{best response function}
\[ \mathbf B_\G : X \times (Y \to R) \to \mathcal P (\Sigma (\G) \times \Sigma (\G)) \]
rather than an equilibrium function $\E_\G$.
Any equilibrium game in this sense determines an open game in our sense by letting
\[ \E_\G (x, k) = \{ \sigma \mid (\sigma, \sigma) \in \mathbf B_\G (x, k) \} \]
While $\E_\G$ records only the Nash equilibria of a game in a context, the function $\mathbf B_\G$ records additional information about \emph{off-equilibrium best responses}, that is to say, how players might correct themselves after playing strategies that are not in equilibrium.
A Nash equilibrium is precisely a strategy profile which is a best response to itself.
The reason for using $\E_\G$ rather than $\mathbf B_\G$ in this paper will be made clear in the next section.

\begin{definition}
	An open game is called \emph{strategically trivial} if it has exactly one strategy profile and that strategy profile is a Nash equilibrium in every context.
\end{definition}

Strategically trivial open games are also called \emph{zero-player open games}.
A strategically trivial open game $(X, S) \to (Y, R)$ determines and is determined by a pair of functions $X \to Y$ and $X \times R \to S$.
Such a pair of functions is called a \emph{lens} \cite{pickering_gibbons_wu_profunctor_optics}.

\begin{definition}
	Let $f : X \to Y$ be a function.
	There are evident strategically trivial open games $(f, 1) : (X, 1) \to (Y, 1)$ and $(1, f) : (1, Y) \to (1, X)$.
\end{definition}

In the string diagram language, the open games $(f, 1)$ and $(1, f)$ are respectively denoted
\begin{center} \begin{tikzpicture}
	\node (X1) at (0, 0) {$X$}; \node (Y1) at (3, 0) {$Y$};
	\node [trapezium, trapezium left angle=0, trapezium right angle=75, shape border rotate=90, trapezium stretches=true, minimum height=.75cm, minimum width=1.5cm, draw] (f1) at (1.5, 0) {$f$};
	\draw [->-] (X1) to (f1); \draw [->-] (f1) to (Y1);
	\node (X2) at (8, 0) {$X$}; \node (Y2) at (5, 0) {$Y$};
	\node [trapezium, trapezium left angle=0, trapezium right angle=75, shape border rotate=270, trapezium stretches=true, minimum height=.75cm, minimum width=1.5cm, draw] (f2) at (6.5, 0) {$f$};
	\draw [->-] (X2) to (f2); \draw [->-] (f2) to (Y2);
\end{tikzpicture} \end{center}

\begin{definition}
	For each set $X$, there is an evident strategically trivial open game $\epsilon_X : (X, X) \to (1, 1)$.
\end{definition}

$\epsilon_X$ is called a \emph{counit} and is denoted
\begin{center} \begin{tikzpicture}
	\node (X1) at (0, 2) {$X$}; \node (X2) at (0, 0) {$X$};
	\draw [->-] (X1) to [out=0, in=90] (1.25, 1) to [out=-90, in=0] (X2);
\end{tikzpicture} \end{center}

Crucially, note that there is no natural strategically trivial game $(1, 1) \to (X, X)$ that could serve as a corresponding `unit'.
As such, we do not allow a backwards-pointing wire to bend around to point forwards in our diagrams.

\begin{definition}
	Let $X, Y, R$ be sets, and let $E : (Y \to R) \to \mathcal P (Y)$.
	We define an open game $\mathcal A_E : (X, 1) \to (Y, R)$, called an \emph{agent}, as follows:
	\begin{itemize}
		\item The set of strategy profiles is $\Sigma (\mathcal A_E) = X \to Y$
		\item The play function is $\P_{\mathcal A_E} (\sigma, x) = \sigma (x)$
		\item The coplay function is $\C_{\mathcal A_E} (\sigma, x, r) = *$
		\item The equilibrium function is $\E_{\mathcal A_E} (x, k) = \{ \sigma \mid \sigma (x) \in E (k) \}$
	\end{itemize}
\end{definition}

Functions $E : (Y \to R) \to \mathcal P (Y)$ are called \emph{multi-valued selection functions}, and are studied in detail in \cite{hedges_etal_selection_equilibria_higher_order_games}.
For example, let $\argmax : (Y \to \R) \to \mathcal P (Y)$ be the function defined by
\[ \argmax (k) = \{ y \mid k (y) \geq k (y') \text{ for all } y' : Y \} \]
Then $\mathcal A_{\argmax} : (X, 1) \to (Y, \R)$ is an open game representing a single decision by an agent who observes an element of $X$, and then chooses an element of $Y$ in order to maximise a real number.

Another example of a selection function is the fixpoint operator $\mathrm{fix} : (X \to X) \to \mathcal P (X)$ defined by $\mathrm{fix} (k) = \{ x \mid x = k (x) \}$.
The fixpoint selection function is the subject of a worked example in \cite{hedges_etal_selection_equilibria_higher_order_games}, where it is used to model a `Keynesian agent' whose strategic aim is to vote with the majority within a voting contest.
The open games $\mathcal A_\mathrm{fix} : (1, 1) \to (X, X)$ are the main subject of this paper.

\begin{definition}
	Let $\G : (X, S) \to (Y, R)$ and $\H : (Y, R) \to (Z, Q)$ be open games.
	We define an open game $\H \circ \G : (X, S) \to (Z, Q)$ as follows:
	\begin{itemize}
		\item The set of strategy profiles is $\Sigma (\H \circ \G) = \Sigma (\G) \times \Sigma (\H)$
		\item The play function is $\P_{\H \circ \G} ((\sigma, \tau), x) = \P_\H (\tau, \P_\G (\sigma, x))$
		\item The coplay function is $\C_{\H \circ \G} ((\sigma, \tau), x, q) = \C_\G (\sigma, x, \C_\H (\tau, \P_\G (\sigma, x), q))$
		\item The equilibrium function is
		\begin{align*}
			\E_{\H \circ \G} (x, k) = \{ (\sigma, \tau) \mid\ &\sigma \in \E_\G (x, \lambda y . \C_\H (\tau, y, k (\P_\H (\tau, y)))) \\
			&\text{ and } \tau \in \E_\H (\P_\G (\sigma, x), k) \}
		\end{align*}
	\end{itemize}
\end{definition}

\begin{definition}
	Let $\G : (X, S) \to (Y, R)$ and $\H : (X', S') \to (Y', R')$ be open games.
	We define an open game $\G \otimes \H : (X \times X', S \times S') \to (Y \times Y', R \times R')$ as follows:
	\begin{itemize}
		\item The set of strategy profiles is $\Sigma (\G \otimes \H) = \Sigma (\G) \times \Sigma (\H)$
		\item The play function is $\P_{\G \otimes \H} ((\sigma, \tau), (x, x')) = (\P_\G (\sigma, x), \P_\H (\tau, x'))$
		\item The coplay function is $\C_{\G \otimes \H} ((\sigma, \tau), (x, x'), (r, r')) = (\C_\G (\sigma, x, r), \C_\H (\tau, x', r'))$
		\item The equilibrium function is
		\begin{align*}
			\E_{\G \otimes \H} ((x, x'), k) = \{ (\sigma, \tau) \mid\ &\sigma \in \E_\G (x, \lambda y . \pi_1 (k (y, \P_\H (\tau, x')))) \\
			&\text{ and } \tau \in \E_\H (x', \lambda y' . \pi_2 (k (\P_\G (\sigma, x), y'))) \}
		\end{align*}
	\end{itemize}
\end{definition}

The operator $\otimes$ defines a monoidal product with unit $I = (1, 1)$.

These sequential and parallel composition operators correspond to sequential and simultaneous play of games.
In the string diagram language they correspond respectively to end-to-end juxtaposition with joining matching wires, and disjoint side-by-side juxtaposition.
Any string diagram that does not contain a wire bending `forwards' (in the opposite way to the counit diagram) can be consistently interpreted as an open game, given interpretations of the individual nodes \cite{hedges_coherence_lenses_open_games}.

\begin{definition}
	Let $\G, \H : (X, S) \to (Y, R)$ be open games.
	A (globular) \emph{morphism of open games} $\alpha : \G \implies \H$ consists of a function $\alpha : \Sigma (\G) \to \Sigma (\H)$ satisfying the following conditions:
	\begin{itemize}
		\item For all $\sigma : \Sigma (\G)$ and $x : X$, $\P_\G (\sigma, x) = \P_\H (\alpha (\sigma), x)$
		\item For all $\sigma : \Sigma (\G)$, $x : X$ and $r : R$, $\C_\G (\sigma, x, r) = \C_\H (\alpha (\sigma), x, r)$
		\item For all $\sigma : \Sigma (\G)$, $x : X$ and $k : Y \to R$, if $\sigma \in \E_\G (x, k)$ then $\alpha (\sigma) \in \E_\H (x, k)$
	\end{itemize}
\end{definition}

Two different, more general definitions of morphisms between pairs of open games with different types are considered in \cite{ghani_kupke_lambert_forsberg_compositional_treatment_iterated_open_games,hedges_morphisms_open_games}.
However they agree on the \emph{globular} morphisms of open games, that is those that are identity on the endpoints, which is precisely this definition.

\begin{theorem}
	There is a symmetric monoidal bicategory $\Game$ where the 0-cells are pairs of sets, the 1-cells are open games and the 2-cells are morphisms of open games.
\end{theorem}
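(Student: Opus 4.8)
The plan is to build every piece of structure by transporting it from the cartesian symmetric monoidal structure $(\mathbf{Set}, \times, 1)$ along the assignment sending an open game to its set of strategy profiles. The starting observation is a faithfulness lemma: a morphism of open games is, by definition, a function on strategy profiles satisfying three conditions, so two parallel morphisms are equal precisely when their underlying functions are equal. Hence the 2-cells between fixed parallel 1-cells form a subset of a $\mathbf{Set}$-homset, vertical composition is ordinary function composition, and—crucially—every equation between 2-cells that we must verify reduces to an equation between underlying functions. First I would set up the bicategory: horizontal composition of 1-cells is $\circ$; the identity 1-cell on $(X,S)$ is the strategically trivial open game given by the identity lens; horizontal composition of 2-cells $\alpha \colon \G \Rightarrow \G'$ and $\beta \colon \H \Rightarrow \H'$ is the product function $\alpha \times \beta \colon \Sigma(\G) \times \Sigma(\H) \to \Sigma(\G') \times \Sigma(\H')$.

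The associator $a_{\mathcal K, \H, \G}$ for $\circ$ and the left and right unitors are taken to be the canonical product associator and unitors on strategy sets, e.g. $(\Sigma(\G) \times \Sigma(\H)) \times \Sigma(\mathcal K) \cong \Sigma(\G) \times (\Sigma(\H) \times \Sigma(\mathcal K))$. For each of these I would check the three clauses of the morphism definition. The play- and coplay-compatibility clauses hold because the play and coplay functions of an iterated composite are built by applying the component functions in sequence and depend on the strategy tuple only through its entries, so reassociating the tuple leaves them unchanged. Since these bijections are invertible and their inverses are again morphisms, the equilibrium clause must be checked as an equality of equilibrium sets; this is where one unwinds the nested $\lambda$-expressions in $\E_{\mathcal K \circ (\H \circ \G)}$ and $\E_{(\mathcal K \circ \H) \circ \G}$ and observes that the two families of continuation arguments coincide. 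The interchange law for 2-cells is exactly the interchange law for product functions in $\mathbf{Set}$, and the pentagon and triangle identities, being equations of 2-cells, reduce by the faithfulness lemma to Mac Lane's pentagon and triangle for $\times$.

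Next I would promote $\otimes$ to a pseudofunctor $\Game \times \Game \to \Game$, acting on $0$-, $1$- and $2$-cells by the stated formulas and by $\alpha \otimes \beta = \alpha \times \beta$. Its pseudofunctoriality constraint compares $(\H \otimes \H') \circ (\G \otimes \G')$, whose strategy set is $(\Sigma(\G) \times \Sigma(\G')) \times (\Sigma(\H) \times \Sigma(\H'))$, with $(\H \circ \G) \otimes (\H' \circ \G')$, whose strategy set is $(\Sigma(\G) \times \Sigma(\H)) \times (\Sigma(\G') \times \Sigma(\H'))$; the comparison is the middle-four interchange bijection swapping the two inner factors, which I would again verify is a morphism of open games, the equilibrium clause once more forcing an unwinding of the continuations in $\E_{\H \circ \G}$, $\E_{\G \otimes \G'}$ and their counterparts. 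The monoidal associator, unitors and braiding are induced by the associator, unitors and symmetry of $\times$, each giving a pseudonatural equivalence whose component morphisms are these structural bijections of strategy sets. Every remaining coherence cell of a symmetric monoidal bicategory—the pentagonator, the unit $2$- and $3$-cells, the two hexagonators, the syllepsis and its symmetry axiom—is likewise built from structural maps of products, and by the faithfulness lemma every coherence axiom reduces to the corresponding equation of functions, which holds by coherence for the symmetric monoidal category $(\mathbf{Set}, \times, 1)$.

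I expect the obstacle to be organisational rather than conceptual: a symmetric monoidal bicategory carries a long list of data and axioms, and one must genuinely exhibit all of it. The only steps with real content are (i) checking that each structural bijection of strategy sets is a morphism of open games, and in particular proving the equilibrium-set equalities for the composition associator and the $\otimes$-interchange by matching the nested continuation arguments in the definitions of $\E_{\H \circ \G}$ and $\E_{\G \otimes \H}$; and (ii) the faithfulness lemma, which is what lets every coherence axiom be discharged by appeal to coherence in $\mathbf{Set}$. Of these, verifying that the continuations line up under reassociation and interchange is the part needing the most care, although once set up it is routine bookkeeping.
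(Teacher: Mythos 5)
The paper does not actually prove this theorem: it is stated bare, with the construction deferred to the cited works on morphisms of open games, so there is no in-paper argument to compare against line by line. Judged on its own terms, your proposal is correct and is essentially the proof those references give: all structural 2-cell data is transported along $\G \mapsto \Sigma(\G)$ from $(\mathbf{Set}, \times, 1)$, the faithfulness observation (2-cells are determined by their underlying functions on strategy profiles) reduces every coherence axiom to Mac Lane coherence for $\times$, and the only computations with content are the equilibrium-set checks for the structural bijections --- associativity of $\circ$, unitality against the identity lens, and the middle-four interchange comparing $(\H \otimes \H') \circ (\G \otimes \G')$ with $(\H \circ \G) \otimes (\H' \circ \G')$ --- all of which do go through as strict equalities of equilibrium sets once the nested continuations are unwound, exactly as you predict.

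One step you gloss deserves explicit mention, since it is where the definition of morphism earns its keep: well-definedness of horizontal composition of 2-cells. To see that $\alpha \times \beta : \Sigma(\G) \times \Sigma(\H) \to \Sigma(\G') \times \Sigma(\H')$ preserves equilibria of the composite, you must pass from $\sigma \in \E_\G(x, \lambda y . \C_\H(\tau, y, k(\P_\H(\tau, y))))$ to $\alpha(\sigma) \in \E_{\G'}(x, \lambda y . \C_{\H'}(\beta(\tau), y, k(\P_{\H'}(\beta(\tau), y))))$; the equilibrium clause for $\alpha$ only lets you change the strategy while fixing the context, so you need the strict play- and coplay-equality clauses for $\beta$ to identify the two continuations as functions (and symmetrically for the second component, and again for $\otimes$ acting on 2-cells). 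This is the same continuation-matching bookkeeping you flag for the associator, and it explains why the definition demands on-the-nose equality of play and coplay rather than mere preservation of equilibria; with that check made explicit, your outline is complete.
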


(See \cite{stay_compact_closed_bicategories} for a clear definition of symmetric monoidal bicategories.)

The bicategory structure allows us to talk about isomorphism and natural isomorphism of open games, where $\G \cong \H$ means that there is an isomorphism $\Sigma (\G) \cong \Sigma (\H)$ that respects play, coplay and equilibria (in both directions).
There is a `horizontal' symmetric monoidal 1-category $\Game_h$ whose morphisms are isomorphism classes of open games.
Technically, our string diagrams are valued in this 1-category and denote isomorphism classes.

\begin{proposition}[\cite{hedges_towards_compositional_game_theory}, section 2.2.13]
	Let $f : X \to Y$ be a function.
	Then there is an isomorphism of open games
	\begin{center} \begin{tikzpicture}
		\node (X1) at (0, 2) {$X$}; \node (Y1) at (0, 0) {$Y$};
		\node [trapezium, trapezium left angle=0, trapezium right angle=75, shape border rotate=90, trapezium stretches=true, minimum height=.75cm, minimum width=1.5cm, draw] (f1) at (1.5, 2) {$f$};
		\draw [->-] (X1) to (f1); \draw [->-] (f1) to [out=0, in=90] (3, 1) to [out=-90, in=0] (Y1);
		\node at (5, 1) {$\cong$};
		\node (X2) at (7, 2) {$X$}; \node (Y2) at (7, 0) {$Y$};
		\node [trapezium, trapezium left angle=0, trapezium right angle=75, shape border rotate=270, trapezium stretches=true, minimum height=.75cm, minimum width=1.5cm, draw] (f2) at (8.5, 0) {$f$};
		\draw [->-] (X2) to [out=0, in=90] (10, 1) to [out=-90, in=0] (f2); \draw [->-] (f2) to (Y2);
	\end{tikzpicture} \end{center}
\end{proposition}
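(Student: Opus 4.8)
\subsection*{Proof proposal}

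The plan is to exploit the fact that every open game appearing in the two diagrams is \emph{strategically trivial}, so that each side is a zero-player game and is therefore completely determined by its underlying lens. Both $(f, 1)$, $(1, f)$ and the counits $\epsilon_X, \epsilon_Y$ are strategically trivial, and both $\circ$ and $\otimes$ preserve strategic triviality: in each case the strategy set of the composite is a product of the factors' strategy sets (hence again a singleton), and the unique strategy profile remains a Nash equilibrium in every context, since the equilibrium conditions in the definitions of $\circ$ and $\otimes$ hold for all contexts whenever they hold for the factors. Consequently each of the two diagrams denotes a strategically trivial open game of type $(X, Y) \to (1, 1)$, and to prove them isomorphic it suffices to check that their associated lenses --- equivalently, their play and coplay functions --- agree, since the isomorphism on strategy sets is then the unique bijection between one-element sets and the equilibrium condition is automatic.

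First I would read off the lens of the left-hand diagram. Its play function is forced, landing in the terminal set $1$. For the coplay function I trace a forward input $x : X$ through $(f, 1)$, producing the forward value $f(x)$, which is fed into the counit $\epsilon_Y$; the defining coplay of $\epsilon_Y$ reflects this forward value back unchanged, so the backward $Y$ wire on the left boundary carries $f(x)$. As the backward type $R = 1$ of $(f, 1)$ contributes nothing, the coplay of the left-hand diagram is the map $(x, *) \mapsto f(x)$. Next I would do the same for the right-hand diagram: again the play function is forced, and for the coplay the forward input $x : X$ first meets the counit $\epsilon_X$, whose coplay reflects it to the backward value $x : X$; this value is then passed through $(1, f)$, whose coplay applies $f$, yielding $f(x)$ on the left boundary. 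Hence the coplay of the right-hand diagram is likewise $(x, *) \mapsto f(x)$.

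Since the two strategically trivial games have identical play and coplay functions and one-element strategy sets, the unique bijection of strategy profiles is an isomorphism of open games respecting play, coplay and equilibria, establishing the claim. I expect the only delicate point to be the bookkeeping in the two coplay computations: one must unwind the definitions of $\circ$ and $\otimes$ together with the counit's reflection in the correct order on each side, and verify that $f$ is applied on the forward pass in the first diagram but on the backward pass in the second, with the two routes nonetheless producing the same function. This is a routine diagram chase rather than a genuine obstacle, and conceptually it is the open-game incarnation of the dinaturality of the counit --- the familiar ``sliding a morphism through a cap'' move of compact closed categories --- here verified by hand precisely because the category has no units with which to invoke the snake equations.
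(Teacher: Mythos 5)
The paper does not actually prove this proposition---it is imported by citation from \cite{hedges_towards_compositional_game_theory}, so there is no in-paper argument to compare against. Your proposal is correct and is precisely the standard verification one expects (and that the cited source gives): both composites are strategically trivial, since $(f,1)$, $(1,f)$ and the counits are and since $\circ$ and $\otimes$ preserve strategic triviality, so each side of type $(X, Y) \to I$ is determined by its lens, and both lenses compute to the pair consisting of the unique map $X \to 1$ and the coplay $(x, *) \mapsto f(x)$, whence the unique bijection of singleton strategy sets is the required isomorphism.
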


\begin{definition}
	A \emph{covariant object} is a pair of the form $(X, 1)$, and a \emph{contravariant object} is a pair of the form $(1, S)$.
	We denote the former by $X^+$ and the latter $S^-$.
	Given $f : X \to Y$, we also write $f^+$ for $(f, 1)$ and $f^-$ for $(1, f)$.
\end{definition}

These respectively define covariant and contravariant monoidal functors from $(\mathbf{Set}, \times, 1)$ to $(\Game_h, \otimes, I)$.

Every object $(X, S)$ of $\Game$ is isomorphic to the
tensor product $X^+ \otimes S^-$ of a covariant object and a contravariant
object. 
By lifting the unique (deleting/copying) comonoids $(X, !_X, \Delta_X)$
from the cartesian monoidal category of sets, we obtain a commutative
comonoid structure $(X^+, !_X^+, \Delta_X^+)$ on every covariant object,
and a commutative monoid structure $(X^-, !_X^-, \Delta_X^-)$ on every contravariant object.
We give these the following special syntax in the diagrammatic language:
\begin{center} \begin{tikzpicture}
	\node (X1) at (0, 0) {$X$}; \node (A1) [circle, scale=.5, fill=black, draw] at (1, 0) {}; \draw [->-] (X1) to (A1);
	\node at (.5, -1) {$!_X^+ : X^+ \to I$};
	\node (X2) at (2.5, 0) {$X$}; \node (A2) [circle, scale=.5,
	fill=black, draw] at (3.5, 0) {}; \node (X3) at (4.5, .5) {$X$};
	\node (X4) at (4.5, -.5) {$X$};
	\draw [->-] (X2) to (A2); \draw [->-] (A2) to [out=45, in=180] (X3); \draw [->-] (A2) to [out=-45, in=180] (X4);
	\node at (3.5, -1) {$\Delta_X^+ : X^+ \to X^+ \otimes X^+$};
	\node (A3) [circle, scale=.5, fill=black, draw] at (6, 0) {}; \node
	(X5) at (7, 0) {$X$}; \draw [->-] (X5) to (A3);
	\node at (6.5, -1) {$!_X^- : I \to X^-$};
	\node (X6) at (8.5, .5) {$X$}; \node (X7) at (8.5, -.5) {$X$};
	\node (A4) [circle, scale=.5, fill=black, draw] at (9.5, 0) {};
	\node (X8) at (10.5, 0) {$X$};
	\draw [->-] (X8) to (A4); \draw [->-] (A4) to [out=135, in=0] (X6); \draw [->-] (A4) to [out=-135, in=0] (X7);
	\node at (9.5, -1) {$\Delta_X^- : X^- \otimes X^- \to X^-$};
\end{tikzpicture} \end{center}

\section{The fixpoint agent}

\begin{definition}
	For each set $X$ we define an open game $\eta_X = \mathcal A_\fix : I \to (X, X)$.
\end{definition}

Explicitly, $\eta_X$ is given by the following data:
\begin{itemize}
	\item The set of strategy profiles is $\Sigma (\eta_X) = X$
	\item The play function is $\P_{\eta_X} (x, *) = x$
	\item The coplay function is $\C_{\eta_X} (x, *, x') = *$
	\item The equilibrium function is $\E_{\eta_X} : 1 \times (X \to X) \to \mathcal P (X)$ is given by $\E_{\eta_X} (*, k) = \{ x : X \mid x = k (x) \}$
\end{itemize}

In the string diagram language, we denote $\eta_X$ as follows:
\begin{center} \begin{tikzpicture}
	\node [circle, scale=.5, draw] (A) at (0, 0) {}; \node (X1) at (2, -1) {$X$}; \node (X2) at (2, 1) {$X$};
	\draw [->-] (X1) to [out=180, in=-90] (A); \draw [->-] (A) to [out=90, in=180] (X2);
\end{tikzpicture} \end{center}

Intuitively, the fixpoint agent forces the values on its two ports to be equal \emph{in a Nash equilibrium}.
However, even if we only care about the behaviour a game in equilibrium,
the reason that an equilibrium \emph{is} an equilibrium ultimately depends on the behaviour of the game off-equilibrium.
That is, the players in the game participate in \emph{counterfactual reasoning} of the form ``What if I played a different strategy?''
This is the high level explanation of why we cannot obtain a compact closed category of open games with the fixpoint agents as its unit, even after taking a quotient to identify open games of our choosing.

With this notation, the coordination game example from \cite{hedges15b}, figure 2 is denoted
\begin{center} \begin{tikzpicture}
	\node [circle, scale=.5, draw] (A1) at (0, 0) {}; \node [circle, scale=.5, draw] (A2) at (0, 1.5) {};
	\draw [->-] (A1) to [out=90, in=-90] (1, .75) to [out=90, in=-90] (A2);
	\draw [->-] (A2) to [out=90, in=90] node [above] {$X$} (2, .75) to [out=-90, in=-90] (A1);
\end{tikzpicture} \end{center}
This represents two agents, each trying to predict the choice of the other.
The set of strategy profiles of this game is $X \times X$, and the equilibria are precisely those of the form $(x, x)$, i.e. the strategy profiles in which the agents successfully coordinate.
This game is isomorphic to the open game representation of a standard coordination game with real-valued payoffs, such as Meeting in New York.
It is closely related to the Keynesian beauty contest worked example from \cite{hedges_etal_selection_equilibria_higher_order_games}, in which three agents try to coordinate with the majority.

In a compact closed category, units and counits of
a monoidal product $X \otimes Y$ are built compositionally from the units
and counits of $X$ and $Y$. 
The counits in $\Game$ satisfy this condition, and so do the fixpoint agents:

\begin{proposition}
	$\eta_1 = \id_I : I \to I$.
\end{proposition}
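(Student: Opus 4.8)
The plan is to unfold both sides of the claimed equation at the trivial object and check that all four pieces of data agree. First I would recall that the identity 1-cell $\id_I$ on $I = (1,1)$ is the strategically trivial open game induced by the identity lens: it has a single strategy profile, its play and coplay functions are the unique maps into the singleton, and (being strategically trivial) its equilibrium function returns the full strategy set in every context.

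Next I would specialise the definition of $\eta_X = \mathcal A_\fix$ to $X = 1$. The strategy set is $\Sigma (\eta_1) = 1$, and both the play function $\P_{\eta_1} (*, *) = *$ and the coplay function $\C_{\eta_1} (*, *, *) = *$ are forced, since their codomains are singletons; these match $\id_I$ immediately. The only component carrying any content is the equilibrium function $\E_{\eta_1} (*, k) = \{ x : 1 \mid x = k (x) \}$.

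The key observation is that this fixpoint condition is vacuous over the terminal set. There is exactly one function $k : 1 \to 1$, namely the identity, and the unique element $* : 1$ satisfies $* = k (*)$ automatically. Hence $\E_{\eta_1} (*, k) = 1$ for every context, so $\eta_1$ is in equilibrium in every context and is therefore strategically trivial, exactly like $\id_I$.

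With all four components of $\eta_1$ and $\id_I$ coinciding on the nose, I would conclude that the two open games are equal, and in particular equal as 1-cells of $\Game_h$. There is no real obstacle here: the entire content of the proposition is the remark that $\fix$ of the unique endomap of a singleton is the whole singleton, which collapses the fixpoint agent to the identity. The only point requiring any care is to confirm that the equilibrium set is the \emph{full} singleton rather than empty, which holds because $*$ genuinely is a fixpoint of the identity on $1$.
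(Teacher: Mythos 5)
Your proof is correct and is exactly the verification the paper leaves implicit behind its one-word proof (``Trivial''): unfolding $\eta_1$ shows all four components are forced into singletons, with the fixpoint condition $* = k(*)$ holding vacuously for the unique $k : 1 \to 1$, so $\eta_1$ coincides with $\id_I$ on the nose. Your care in checking that the equilibrium set is the full singleton rather than empty is the one nontrivial point, and you handle it properly.
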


\begin{proof}
	Trivial.
\end{proof}

\begin{proposition}
	Let $X$ and $Y$ be sets.
	Then $\eta_{X \times Y} : I \to (X \times Y, X \times Y)$ is
	naturally isomorphic to the following open game:
	\begin{center} \begin{tikzpicture}
		\node [circle, scale=.5, draw] (A1) at (0, 2) {}; \node [circle, scale=.5, draw] (A2) at (0, 1) {};
		\node (X1) at (2, 3) {$X$}; \node (Y1) at (2, 2) {$Y$}; \node (X2) at (2, 1) {$X$}; \node (Y2) at (2, 0) {$Y$};
		\draw [->-] (X2) to [out=180, in=-90] (A1); \draw [->-] (A1) to [out=90, in=180] (X1);
		\draw [->-] (Y2) to [out=180, in=-90] (A2); \draw [->-] (A2) to [out=90, in=180] (Y1);
	\end{tikzpicture} \end{center}
\end{proposition}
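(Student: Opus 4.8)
The plan is to recognise the open game drawn on the right as the monoidal product $\eta_X \otimes \eta_Y$ of the two fixpoint agents placed in parallel (the grouped wire layout differs from the naive stacking only by the coherence isomorphism reordering the four boundary wires, which is invisible in $\Game_h$), and then to show that $\eta_X \otimes \eta_Y$ and $\eta_{X \times Y}$ carry literally the same data under the identification $\Sigma(\eta_X) \times \Sigma(\eta_Y) = X \times Y = \Sigma(\eta_{X \times Y})$. The witnessing isomorphism will therefore be the identity function on $X \times Y$, and the task is to check that it respects play, coplay and equilibria in both directions.

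First I would unfold the three easy pieces of data. By the definition of $\otimes$, $\Sigma(\eta_X \otimes \eta_Y) = X \times Y$; the play function is $\P_{\eta_X \otimes \eta_Y}((x,y), *) = (\P_{\eta_X}(x,*), \P_{\eta_Y}(y,*)) = (x,y)$, agreeing with $\P_{\eta_{X \times Y}}$; and the coplay function is trivial on both sides, since $\eta_X$, $\eta_Y$ and $\eta_{X \times Y}$ all have trivial coplay. These match the data of $\eta_{X \times Y}$ under the identity bijection with no work.

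The substantive step is the equilibrium function. Fix a continuation $k : X \times Y \to X \times Y$. Substituting $\P_{\eta_X} = \id$ and $\P_{\eta_Y} = \id$ into the equilibrium clause of the definition of $\otimes$, a pair $(x,y)$ lies in $\E_{\eta_X \otimes \eta_Y}(*, k)$ exactly when $x \in \E_{\eta_X}(*, \lambda x'.\, \pi_1(k(x', y)))$ and $y \in \E_{\eta_Y}(*, \lambda y'.\, \pi_2(k(x, y')))$. Applying the fixpoint equilibrium condition $\E_{\eta_X}(*, k_1) = \{ x \mid x = k_1(x) \}$ to each factor turns this into the conjunction $x = \pi_1(k(x,y))$ and $y = \pi_2(k(x,y))$. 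Since every function into a product decomposes as $k = \langle \pi_1 \circ k, \pi_2 \circ k \rangle$, this conjunction is equivalent to the single equation $(x,y) = k(x,y)$, which is exactly the membership condition of $\E_{\eta_{X \times Y}}(*, k)$. Hence the two equilibrium sets coincide, and the identity map is an isomorphism of open games in both directions.

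Finally I would record naturality: because the witnessing isomorphism is the identity on the underlying set $X \times Y$, it commutes with the covariant and contravariant functorial actions of maps $X \to X'$ and $Y \to Y'$, giving the \emph{naturally} isomorphic strengthening. The only place needing care is the bookkeeping of the currying and the projections in the $\otimes$ equilibrium formula, and the mild subtlety is recognising that the two \emph{componentwise} fixpoint conditions recombine into a single fixpoint condition on the product; this is where the proof really uses that the two agents interact only through the shared continuation $k$ and not with each other. I expect no genuine obstacle beyond this routine unfolding.
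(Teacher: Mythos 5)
Your proof is correct and takes essentially the same route as the paper's: both identify the depicted game with $\eta_X \otimes \eta_Y$ on strategy profiles $X \times Y$, note that play and coplay trivially agree, unfold the equilibrium clause of $\otimes$ to obtain the componentwise conditions $x = \pi_1(k(x,y))$ and $y = \pi_2(k(x,y))$, and observe that these are jointly equivalent to the single fixpoint condition $(x,y) = k(x,y)$ defining $\E_{\eta_{X \times Y}}(*,k)$. Your additional remarks on the wire-reordering coherence and on naturality of the identity isomorphism are fine elaborations of points the paper leaves implicit.
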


\begin{proof}
	Let $\G$ be the depicted game.
	Its set $\Sigma (\G)$ of strategy profiles is naturally isomorphic to $X \times Y$.
	It is trivial to check that the play and coplay functions agree.
	Given $k : X \times Y \to X \times Y$, using the definition of $\otimes$ we have that $(x, y) \in \E_\G (*, k)$ iff $x = \pi_1 (k (x, y))$ and $y = \pi_2 (k (x, y))$.
	This is equivalent to $(x, y) = k (x, y)$, or $(x, y) \in \E_{\epsilon_{X \times Y}} (*, k)$.
\end{proof}

If we use the more general best response formulation of open games, the
previous result fails, even laxly.
This is the reason that we use the equilibrium set formulation in this paper.
(It is a rare example of a result about open games that holds in equilibrium, but can fail off-equilibrium.)

\begin{proposition}
	Let $X$ be a set. Then there is morphism of open games
	\begin{center} \begin{tikzpicture}
		\node (A1) [circle, scale=.5, draw] at (0, 0) {};
		\draw [->-] (A1) to [out=90, in=180] (1, 1) to [out=0, in=90] (2, 0) to [out=-90, in=0] (1, -1) to [out=180, in=-90] (A1);
		\node at (3.5, 0) {$\implies$};
		\node [rectangle, scale=6, dashed, draw] at (6, 0) {};
	\end{tikzpicture} \end{center}
\end{proposition}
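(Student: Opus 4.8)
The plan is to read the left-hand diagram as the scalar open game $\epsilon_X \circ \eta_X : I \to I$ --- the fixpoint agent $\eta_X$ with both of its ports capped off by the counit $\epsilon_X : (X, X) \to I$ --- and to read the dashed empty box as the identity $\id_I : I \to I$. The required morphism is then a globular morphism of open games $\epsilon_X \circ \eta_X \implies \id_I$, and since $\Sigma (\id_I) = 1$ there is only one candidate for the underlying function, namely the unique map ${!}_X : \Sigma (\epsilon_X \circ \eta_X) \to 1$. So the whole content of the proposition is that this canonical collapse satisfies the three morphism conditions.

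First I would compute the data of the composite by unfolding the definition of sequential composition with $\G = \eta_X$ and $\H = \epsilon_X$. Recall that $\epsilon_X$ is strategically trivial with $\P_{\epsilon_X} (\ast, y) = \ast$ and $\C_{\epsilon_X} (\ast, y, \ast) = y$. Hence $\Sigma (\epsilon_X \circ \eta_X) = X \times 1 \cong X$, and both the play and coplay functions of the composite necessarily land in the terminal set $1$, so they are trivial and automatically agree with those of $\id_I$. The only substantive step is the equilibrium set. For a context $(\ast, k)$ with $k : 1 \to 1$ (so $k = \id_1$), the continuation that the composition formula feeds back to $\eta_X$ is $\lambda y.\, \C_{\epsilon_X} (\ast, y, k (\P_{\epsilon_X} (\ast, y))) = \lambda y.\, \C_{\epsilon_X} (\ast, y, \ast) = \lambda y.\, y = \id_X$. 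Therefore $x \in \E_{\epsilon_X \circ \eta_X} (\ast, k)$ iff $x \in \E_{\eta_X} (\ast, \id_X) = \{ x \mid x = x \}$, i.e. iff $x \in X$: every strategy profile is an equilibrium in the unique context.

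It then remains to verify that ${!}_X$ is a morphism. The play and coplay conditions hold trivially because every value in sight lives in $1$. For equilibrium preservation, any $x \in \E_{\epsilon_X \circ \eta_X} (\ast, k)$ maps to ${!}_X (x) = \ast$, which lies in $\E_{\id_I} (\ast, k) = \{ \ast \}$ since the unique strategy profile of $\id_I$ is an equilibrium in every context; so the condition is immediate. I expect the only place needing care to be the continuation calculation above, which is exactly where the self-referential nature of the fixpoint agent enters: capping $\eta_X$ with $\epsilon_X$ turns its equilibrium condition $x = k (x)$ into the vacuous $x = x$, which is why all of $X$ becomes an equilibrium. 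This also explains why the comparison can only run \emph{towards} $\id_I$ --- it collapses the many equilibria of the loop onto the single equilibrium of $\id_I$, and is not in general invertible.
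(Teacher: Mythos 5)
Your proposal is correct and follows the same route as the paper's proof, which simply observes that the strategy profile sets are $X$ and $1$, that every strategy of the loop is an equilibrium, and that the unique function $X \to 1$ therefore defines the morphism. Your continuation calculation $\lambda y.\, \C_{\epsilon_X}(\ast, y, k(\P_{\epsilon_X}(\ast, y))) = \id_X$ is exactly the step the paper leaves implicit in asserting that every strategy on the left is an equilibrium, so you have merely (and correctly) spelled out the details.
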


(Note that the empty string diagram on the right hand side denotes the identity open game on $I$.)

\begin{proof}
	The sets of strategy profiles are respectively $X$ and $1$.
	Since every strategy on the left is an equilibrium, the unique function $X \to 1$ defines a morphism of open games.
\end{proof}

\begin{definition}
	For each set $X$, we define open games $\rhd_X : X^+ \to X^+$ and $\lhd_X : X^- \to X^-$, respectively denoted
	\begin{center} \begin{tikzpicture}
		\node (X1) at (0, 0) {$X$}; \node (A1) [circle, scale=.5, draw] at (1.5, 0) {}; \node (X2) at (3, 0) {$X$};
		\draw [->-] (X1) to (A1); \draw [->-] (A1) to (X2);
		\node (X3) at (5, 0) {$X$}; \node (A2) [circle, scale=.5, draw] at (6.5, 0) {}; \node (X4) at (8, 0) {$X$};
		\draw [->-] (X4) to (A2); \draw [->-] (A2) to (X3);
	\end{tikzpicture} \end{center}
	to be equal to the `snake' open games defined by
	\begin{center} \begin{tikzpicture}
		\node (X1) at (0, 0) {$X$}; \node (X2) at (3, 3) {$X$};
		\node (A1) [circle, scale=.5, draw] at (1, 2) {};
		\draw [->-] (X1) to [out=0, in=-90] (2, 1) to [out=90, in=-90] (A1); \draw [->-] (A1) to [out=90, in=180] (X2);
		\node (X3) at (6, 3) {$X$}; \node (X4) at (9, 0) {$X$};
		\node (A2) [circle, scale=.5, draw] at (7, 1) {};
		\draw [->-] (X4) to [out=180, in=-90] (A2); \draw [->-] (A2) to [out=90, in=-90] (8, 2) to [out=90, in=0] (X3);
	\end{tikzpicture} \end{center}
\end{definition}

Up to natural isomorphism, $\rhd_X$ is concretely given by the data
\begin{itemize}
	\item The set of strategy profiles is $\Sigma (\rhd_X) = X$
	\item The play function is $\P_{\rhd_X} (x', x) = x'$
	\item The coplay function is $\C_{\rhd_X} (x', x, *) = *$
	\item The equilibrium function $\E_{\rhd_X} : X \times (X \to 1) \to \mathcal P (X)$ is given by $\E_{\rhd_X} (x, *) = \{ x \}$
\end{itemize}
and $\lhd_X$ is given by the data
\begin{itemize}
	\item The set of strategy profiles is $\Sigma (\lhd_X) = X$
	\item The play function is $\P_{\lhd_X} (x, *) = *$
	\item The coplay function is $\P_{\lhd_X} (x', *, x) = x'$
	\item The equilibrium function $\E_{\lhd_X} : 1 \times (1 \to X) \to \mathcal P (X)$ is given by $\E_{\lhd_X} (*, k) = \{ k (*) \}$
\end{itemize}

There are unique functions $\Sigma (\rhd_X) \to \Sigma (\id_{X^+})$ and $\Sigma (\lhd_X) \to \Sigma (\id_{X^-})$, namely $x \mapsto *$, however they fail to define morphisms of open games.
In particular, for $x' \neq x$ we have $\P_{\rhd_X} (x', x) = x' \neq x = \P_{\id_{(X, 1)}} (*, x)$.
The same argument applies to any choice of function $\Sigma (\id_{X^+}) \to \Sigma (\rhd_X)$ or $\Sigma (\id_{X^-}) \to \Sigma (\lhd_X)$.
Thus we do not obtain a compact closed bicategory \cite{stay_compact_closed_bicategories}, or even a weaker lax or colax variant of one.

The problem remains open of finding a sense in which $\rhd_X$ is related to $\id_{X^+}$ and $\lhd_X$ is related to $\id_{X^-}$.
The authors explored the following equivalence relation on the class of open games of a fixed type:

\begin{definition}
	Let $\G, \H : (X, S) \to (Y, R)$ be open games.
	Given a context $(x, k) : X \times (Y \to R)$ and strategies $\sigma : \Sigma (\G)$, $\tau : \Sigma (\H)$, we write $\sigma \sim_{(x, k)} \tau$ if $\P_\G (\sigma, x) = \H (\tau, x) =: y$ and $\C_\G (\sigma, x, k (y)) = \C_\H (\tau, x, k (y))$.
	We write $\G \sim \H$ if for every $\sigma \in \E_\G (x, k)$ there is $\tau \in \E_\H (x, k)$ with $\sigma \sim_{(x, k)} \tau$, and for every $\tau \in \E_\H (x, k)$ there is $\sigma \in \E_\G (x, k)$ with $\sigma \sim_{(x, k)} \tau$.
\end{definition}

This relation satisfies $\rhd_X \sim \id_{X^+}$ and $\lhd_X \sim \id_{X^-}$, and apparently captures the intuition that these games are `the same' in the sense that they have the same behaviour in every Nash equilibrium.
Unfortunately, $\sim$ is not compositional: there are open games $\G \sim
\G'$ and $\H \sim \H'$ for which $\H \circ \G \not\sim \H' \circ \G'$.
An interpretation of this is that morphisms of open games require behaviour
to be the same in all contexts, not just those in equilibrium.
This extra generality is crucial to making morphisms of open games form a monoidal bicategory, that is to say, to be compositional.


\begin{proposition}
	Let $f : X \to Y$ be a function.
	Then there is a morphism of open games
	\begin{center} \begin{tikzpicture}
		\node (A1) [circle, scale=.5, draw] at (0, 0) {}; \node (X1) at (3, -1) {$X$}; \node (Y1) at (3, 1) {$Y$};
		\node [trapezium, trapezium left angle=0, trapezium right angle=75, shape border rotate=90, trapezium stretches=true, minimum height=.75cm, minimum width=1.5cm, draw] (f1) at (1.5, 1) {$f$};
		\draw [->-] (X1) to [out=180, in=-90] (A1); \draw [->-] (A1) to [out=90, in=180] node [above, near end] {$X$} (f1); \draw [->-] (f1) to (Y1);
		\node at (5, 0) {$\implies$};
		\node (A2) [circle, scale=.5, draw] at (7, 0) {}; \node (X2) at (10, -1) {$X$}; \node (Y2) at (10, 1) {$Y$};
		\node [trapezium, trapezium left angle=0, trapezium right angle=75, shape border rotate=270, trapezium stretches=true, minimum height=.75cm, minimum width=1.5cm, draw] (f2) at (8.5, -1) {$f$};
		\draw [->-] (X2) to (f2); \draw [->-] (f2) to [out=180, in=-90] node [below, near start] {$Y$} (A2); \draw [->-] (A2) to [out=90, in=180] (Y2);
	\end{tikzpicture} \end{center}
\end{proposition}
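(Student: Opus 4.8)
The plan is to read off concrete descriptions of the two open games, observe that the play condition forces the morphism to be $f$ itself, and then verify the equilibrium condition by a single application of $f$ to a fixpoint equation.

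Write $\G$ for the open game on the left and $\H$ for the one on the right. Both have type $I \to (Y, X)$, so a context is a pair $(*, k)$ with $k : Y \to X$. Unwinding the definitions of $\circ$ and $\otimes$ (this is the bookkeeping-heavy step), I expect the following descriptions up to natural isomorphism. The game $\G$ is $\eta_X$ with the lens $f^+$ post-composed on the forward wire, so $\Sigma (\G) = X$ (only $\eta_X$ carries a strategy, since $f^+$ is strategically trivial), the play function is $\P_\G (x, *) = f (x)$, the coplay function is trivial, and because the forward value of $\eta_X$ passes through $f$ before meeting $k$, the continuation seen by $\eta_X$ is $k \circ f$; hence
\[ \E_\G (*, k) = \{ x : X \mid x = k (f (x)) \}. \]
The game $\H$ is $\eta_Y$ with the lens $f^-$ pre-composed on the backward wire, so $\Sigma (\H) = Y$, the play function is $\P_\H (y, *) = y$, the coplay function is trivial, and because the backward value $k (y)$ passes through $f$ before reaching the backward port of $\eta_Y$, the continuation seen by $\eta_Y$ is $f \circ k$; hence
\[ \E_\H (*, k) = \{ y : Y \mid y = f (k (y)) \}. \]

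With these in hand the morphism is forced. The play condition $\P_\G (x, *) = \P_\H (\alpha (x), *)$ reads $f (x) = \alpha (x)$, so I take $\alpha = f : X \to Y$. The coplay condition is vacuous, as both coplay functions land in $1$. It then remains to check that $\alpha$ preserves equilibria: given $x \in \E_\G (*, k)$, that is $x = k (f (x))$, I must show $f (x) \in \E_\H (*, k)$, that is $f (x) = f (k (f (x)))$. But this follows immediately by applying $f$ to both sides of $x = k (f (x))$.

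The one genuinely error-prone part is reading off the two equilibrium functions correctly from the composite, and in particular tracking on which side of the fixpoint equation $f$ appears: $\G$ composes $f$ on the forward wire, producing the continuation $k \circ f$, whereas $\H$ composes $f$ on the backward wire, producing $f \circ k$. Once this is pinned down the verification is a one-line application of $f$; note also that, since the definition of a morphism requires only the forward implication $\sigma \in \E_\G \Rightarrow \alpha (\sigma) \in \E_\H$, there is no need to match equilibria of $\H$ back to equilibria of $\G$.
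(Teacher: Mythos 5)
Your proposal is correct and matches the paper's proof, which simply notes that the strategy profile sets are $X$ and $Y$ and that $f$ itself defines the morphism; you have filled in exactly the bookkeeping the paper leaves implicit (the continuations $k \circ f$ and $f \circ k$, and the one-line check that $x = k(f(x))$ implies $f(x) = f(k(f(x)))$), and all of it is accurate.
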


\begin{proof}
	The sets of strategy profiles of these games are respectively $X$ and $Y$.
	The function $f : X \to Y$ defines a morphism of open games.
\end{proof}

This is the first instance of a general pattern in this paper, that we can move an open game \emph{backwards} past a white node.

%
%
%

\begin{proposition}
	For any open game $\G : X^+ \to Y^+$ between covariant objects there is a morphism of open games
	\begin{center} \begin{tikzpicture}
		\node (w1) [circle, scale=.5, draw] at (1, 0) {}; \node (X1) at (0, 0) {$X$}; \node (Y1) at (4, 0) {$Y$};
		\node [rectangle, minimum height=1.5cm, minimum width=.75cm, draw] (G1) at (2.5, 0) {$\G$};
		\draw [->-] (X1) to (w1); \draw [->-] (w1) to node [above] {$X$} (G1); \draw [->-] (G1) to (Y1);
		\node at (5.5, 0) {$\implies$};
		\node (X2) at (7, 0) {$X$}; \node [rectangle, minimum height=1.5cm, minimum width=.75cm, draw] (G2) at (8.5, 0) {$\G$};
		\node (w2) [circle, scale=.5, draw] at (10, 0) {}; \node (Y2) at (11, 0) {$Y$};
		\draw [->-] (X2) to (G2); \draw [->-] (G2) to node [above] {$Y$} (w2); \draw [->-] (w2) to (Y2);
	\end{tikzpicture} \end{center}
	For any open game $\H : Y^- \to X^-$ between contravariant objects there is a morphism of open games
	\begin{center} \begin{tikzpicture}
		\node (Y1) at (0, 0) {$Y$}; \node (H1) [rectangle, minimum height=1.5cm, minimum width=.75cm, draw] at (1.5, 0) {$\H$};
		\node (w1) [circle, scale=.5, draw] at (3, 0) {}; \node (X1) at (4, 0) {$X$};
		\draw [->-] (X1) to (w1); \draw [->-] (w1) to node [above] {$X$} (H1); \draw [->-] (H1) to (Y1);
		\node at (5.5, 0) {$\implies$};
		\node (Y2) at (7, 0) {$Y$}; \node (w2) [circle, scale=.5, draw] at (8, 0) {};
		\node (H2) [rectangle, minimum height=1.5cm, minimum width=.75cm, draw] at (9.5, 0) {$\H$}; \node (X2) at (11, 0) {$X$};
		\draw [->-] (X2) to (H2); \draw [->-] (H2) to node [above] {$Y$} (w2); \draw [->-] (w2) to (Y2);
	\end{tikzpicture} \end{center}
\end{proposition}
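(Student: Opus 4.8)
\section*{Proof proposal}

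The plan is to reduce each claim to an explicit unfolding of the two composite open games involved and then to exhibit a single function between their sets of strategy profiles, which I will check against the three clauses in the definition of a morphism of open games. The two statements are mirror images of one another, so I would carry out the covariant case in full and obtain the contravariant case by the dual computation, swapping the roles of play/coplay and of the forward/backward wires.

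For the covariant case, the left-hand game is $\G \circ \rhd_X : (X,1) \to (Y,1)$ and the right-hand game is $\rhd_Y \circ \G : (X,1) \to (Y,1)$. Unfolding the definitions of $\rhd$ and of sequential composition, the key observation is that $\rhd_X$ overwrites its external input with the value of its own strategy: since $\P_{\rhd_X}(x',x) = x'$, the composite $\G \circ \rhd_X$ has strategy profiles $X \times \Sigma(\G)$ and play function $\P_{\G \circ \rhd_X}((a,b),x) = \P_\G(b,a)$, which does \emph{not} depend on the external input $x$. On the right, $\rhd_Y \circ \G$ has strategy profiles $\Sigma(\G) \times Y$ with $\P_{\rhd_Y \circ \G}((b,a'),x) = a'$. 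Because the left-hand play value is independent of $x$, it can be precomputed and stored as the $\rhd_Y$-strategy on the right, which suggests the candidate morphism $\alpha(a,b) = (b, \P_\G(b,a))$.

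I would then verify the three conditions. The play condition holds because both sides evaluate to $\P_\G(b,a)$ for every input $x$, and the coplay condition is vacuous since all backward components are trivial ($R=1$). For equilibria I would compute, using the trivial coplay of $\rhd$, that $\E_{\G \circ \rhd_X}(x,k) = \{(x,b) \mid b \in \E_\G(x,k)\}$ while $\E_{\rhd_Y \circ \G}(x,k) = \{(b, \P_\G(b,x)) \mid b \in \E_\G(x,k)\}$; then $\alpha$ sends $(x,b)$ to $(b, \P_\G(b,x))$, which lies in the required set, so equilibria are preserved. For the contravariant claim the dual computation shows that $\lhd_X$ overwrites its backward ($R$-side) input with its strategy, making $\C_{\lhd_X \circ \H}((\sigma,x),*,q) = \C_\H(\sigma,*,x)$ independent of $q$; the analogous morphism is $\alpha(\sigma,x) = (\C_\H(\sigma,*,x), \sigma)$, with the roles of the play and coplay checks interchanged.

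The main obstacle, and the conceptual point of the result, is the play/coplay condition: a genuine 2-cell of $\Game$ must match play and coplay on \emph{all} inputs, not merely in equilibrium, which is exactly the strictness that distinguishes morphisms of open games from the relation $\sim$. This is possible here only because the white node severs the dependence of the forward (resp.\ backward) value on the external input, so that the value produced on one side can be faithfully recorded in the strategy of the white node on the other side. Confirming that this severing is exact, that is, that the precomputed value $\P_\G(b,a)$ (resp.\ $\C_\H(\sigma,*,x)$) really agrees with the right-hand play (resp.\ coplay) for every input and not just in equilibrium, is the step requiring care, although once the composites are unfolded it is a direct check.
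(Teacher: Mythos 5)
Your proof is correct and takes essentially the same approach as the paper: the paper's entire proof consists of exhibiting exactly the two strategy-profile maps you construct, namely $(x, \sigma) \mapsto (\sigma, \P_\G (\sigma, x))$ for the covariant case and $(\sigma, x) \mapsto (\C_\H (\sigma, *, x), \sigma)$ for the contravariant case. Your unfolding of the composites and explicit verification of the play, coplay and equilibrium conditions merely fills in details the paper leaves to the reader.
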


\begin{proof}
	For the former pair, the function $X \times \Sigma (\G) \to \Sigma (\G) \times Y$ is given by $(x, \sigma) \mapsto (\sigma, \P_\G (\sigma, x))$.
	For the latter pair, the function $\Sigma (\H) \times X \to Y \times \Sigma (\H)$ is given by $(\sigma, x) \mapsto (\C_\G (\sigma, *, x), \sigma)$.
\end{proof}

\section{Bialgebras in $\Game$}

In section 2 we defined copying operators, which are lifted from the copying comonoids in the category of sets.
In a compact closed category, the transpose of a copying operator is a dual `matching' operator, and these typically interact as a (special commutative) Frobenius algebra.
(The category of relations provides an example.)
In this section we define open games that behave like matching operators \emph{when in equilibrium}, and investigate their properties.
Surprisingly, these `imperfect' matching operators interact with the `true' copying operators not as a Frobenius algebra, but (almost) as a commutative \emph{bialgebra}.

\begin{definition}
	Let $X$ be a set.
	We define open games
	\begin{center} \begin{tikzpicture}
		\node (A1) [circle, scale=.5, draw] at (0, 0) {}; \node (X1) at (1, 0) {$X$}; \draw [->-] (A1) to (X1);
		\node at (.5, -1) {$\bangrot_X^+ : I \to X^+$};
		\node (X2) at (2.5, .5) {$X$}; \node (X3) at (2.5, -.5) {$X$}; \node (A2) [circle, scale=.5, draw] at (3.5, 0) {}; \node (X4) at (4.5, 0) {$X$};
		\draw [->-] (X2) to [out=0, in=135] (A2); \draw [->-] (X3) to [out=0, in=-135] (A2); \draw [->-] (A2) to (X4);
		\node at (3.5, -1) {$\nabla_X^+ : X^+ \otimes X^+ \to X^+$};
		\node (X5) at (6, 0) {$X$}; \node (A3) [circle, scale=.5, draw] at (7, 0) {}; \draw [->-] (A3) to (X5);
		\node at (6.5, -1) {$\bangrot_X^- : X^- \to I$};
		\node (X6) at (8.5, 0) {$X$}; \node (A4) [circle, scale=.5, draw] at (9.5, 0) {}; \node (X7) at (10.5, .5) {$X$}; \node (X8) at (10.5, -.5) {$X$};
		\draw [->-] (X7) to [out=180, in=45] (A4); \draw [->-] (X8) to [out=180, in=-45] (A4); \draw [->-] (A4) to (X6);
		\node at (9.5, -1) {$\nabla_X^- : X^- \to X^- \otimes X^-$};
	\end{tikzpicture} \end{center}
	as follows.
	In each case, the set of strategy profiles is $\Sigma = X$.
	$\bangrot_X^+$ and $\nabla_X^+$ have play functions $\P_{\bangrot_X^+} (x', *) = x'$ and $\P_{\nabla_X^+} (x', (x_1, x_2)) = x'$.
	$\bangrot_X^-$ and $\nabla_X^-$ have coplay functions $\C_{\bangrot_X^-} (x', *, *) = x'$ and $\C_{\nabla_X^-} (x', *, (x_1, x_2)) = x'$.
	The equilibrium sets are respectively
	\begin{align*}
		\E_{\bangrot_X^+} (*, *) &= X && \E_{\nabla_X^+} ((x_1, x_2), *) = \begin{cases}
			\{ x_1 \} &\text{ if } x_1 = x_2 \\
			\varnothing &\text{ otherwise}
		\end{cases} \\
		\E_{\bangrot_X^-} (*, *) &= X && \E_{\nabla_X^-} (*, k) = \begin{cases}
			\{ \pi_1 (k (*)) \} &\text{ if } \pi_1 (k (*)) = \pi_2 (k (*)) \\
			\varnothing &\text{ otherwise}
		\end{cases}
	\end{align*}
\end{definition}

\begin{proposition}
	There are morphisms of open games
	\begin{center} \begin{tikzpicture}
		\node (X1) at (0, 0) {$X$}; \node (A1) [circle, scale=.5, draw] at (1.5, 0) {}; \node (X2) at (3, 0) {$X$};
		\draw [->-] (X1) to (A1); \draw [->-] (A1) to (X2);
		\node at (5, 0) {$\iff$};
		\node (X3) at (7, .5) {$X$}; \node (A2) [circle, scale=.5, draw] at (7.5, -.5) {};
		\node (A3) [circle, scale=.5, draw] at (8.5, 0) {}; \node (X4) at (10, 0) {$X$};
		\draw [->-] (X3) to [out=0, in=135] (A3); \draw [->-] (A2) to [out=0, in=-135] (A3); \draw [->-] (A3) to (X4);
	\end{tikzpicture} \end{center}
	and
	\begin{center} \begin{tikzpicture}
		\node (X1) at (0, 0) {$X$}; \node (A1) [circle, scale=.5, draw] at (1.5, 0) {}; \node (X2) at (3, 0) {$X$};
		\draw [->-] (X2) to (A1); \draw [->-] (A1) to (X1);
		\node at (5, 0) {$\iff$};
		\node (X3) at (7, 0) {$X$}; \node (A2) [circle, scale=.5, draw] at (8.5, 0) {};
		\node (X4) at (10, .5) {$X$}; \node (A3) [circle, scale=.5, draw] at (9.5, -.5) {};
		\draw [->-] (X4) to [out=180, in=45] (A2); \draw [->-] (A3) to [out=180, in=-45] (A2); \draw [->-] (A2) to (X3);
	\end{tikzpicture} \end{center}
	(which are, however, not isomorphisms).
\end{proposition}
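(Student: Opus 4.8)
The plan is to reduce each two-sided morphism claim to a direct comparison of the snake game ($\rhd_X$, resp. $\lhd_X$) with the explicit open game computed from the right-hand side. First I would put the right-hand composites into normal form. For the first pair, the right-hand side is the composite $\mathcal R = \nabla_X^+ \circ (\id_{X^+} \otimes \bangrot_X^+)$, whose strategy set is $\Sigma(\mathcal R) = X \times X$; writing a strategy as $(a, b)$ with $a$ the value supplied by $\bangrot_X^+$ and $b$ the value chosen by $\nabla_X^+$, the play function is $\P_{\mathcal R}((a,b), x) = b$ and the coplay is trivial. The only real work is the equilibrium set: unfolding the sequential-composition equilibrium formula, the genuine input $x$ passes through the identity into the first port of $\nabla_X^+$ while $a$ enters the second, so $\nabla_X^+$ imposes $x = a$ and selects the value $x$, forcing $b = x$; since both $\bangrot_X^+$ and the identity accept every strategy, we get $\E_{\mathcal R}(x, *) = \{(x, x)\}$. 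Thus $\mathcal R$ plays $x$ in its unique equilibrium $(x,x)$, exactly mirroring $\rhd_X$, whose unique equilibrium at input $x$ is the strategy $x$ with play $x$.

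Given this data the two morphisms are immediate. For $\rhd_X \implies \mathcal R$ I would take the diagonal $x' \mapsto (x', x')$: the play condition demands that the second coordinate equal $x'$, which the diagonal satisfies, the coplay condition is vacuous, and the unique equilibrium $x' = x$ of $\rhd_X$ is sent to $(x, x) \in \E_{\mathcal R}(x, *)$. For $\mathcal R \implies \rhd_X$ I would take the projection $(a, b) \mapsto b$: the play condition $b = b$ holds for every strategy, and $(x, x) \mapsto x \in \E_{\rhd_X}(x, *)$. The second pair is the contravariant dual: the right-hand side is $\mathcal L = (\id_{X^-} \otimes \bangrot_X^-) \circ \nabla_X^-$, again with $\Sigma(\mathcal L) = X \times X$, now with trivial play and coplay $\C_{\mathcal L}((c, d), *, q) = c$; the comultiplication $\nabla_X^-$ forces the returning residual to be diagonal, so that $\E_{\mathcal L}(*, k) = \{(k(*), k(*))\}$. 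The morphisms $\lhd_X \implies \mathcal L$ and $\mathcal L \implies \lhd_X$ are then the diagonal and the projection $(c, d) \mapsto c$ onto the $\nabla_X^-$-component, verified exactly as before but using the coplay condition in place of the play condition.

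Finally, to see that none of these morphisms is an isomorphism, I would note that the play condition for any morphism $\mathcal R \implies \rhd_X$ forces it to be the second projection $(a, b) \mapsto b$, and dually the coplay condition forces any morphism $\mathcal L \implies \lhd_X$ to be the first projection $(c, d) \mapsto c$. Neither projection is injective once $|X| \geq 2$, so no morphism in this direction can be the underlying function of an invertible $2$-cell; hence neither $\iff$ is an isomorphism of open games.

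The step I expect to be the main obstacle is the equilibrium computation for $\mathcal R$ and $\mathcal L$, since it requires carefully threading the sequential-composition formula together with the tensor formula through the trivial residual data of covariant (resp. contravariant) objects — in particular tracking how $\nabla_X^+$ (resp. $\nabla_X^-$) forces the pair of values entering (resp. leaving) it to coincide. Once the normal-form equilibrium set is pinned down, all the remaining morphism verifications are routine.
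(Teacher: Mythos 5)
Your proposal is correct and takes essentially the same route as the paper: the paper's proof likewise identifies the strategy sets as $X$ and $X \times X$ and gives the diagonal $x \mapsto (x, x)$ together with the projections $(x_1, x_2) \mapsto x_2$ (covariant case) and $(x_1, x_2) \mapsto x_1$ (contravariant case) as the morphisms, exactly matching your $(a,b) \mapsto b$ and $(c,d) \mapsto c$. You additionally spell out the equilibrium computations $\E_{\mathcal R}(x, *) = \{(x,x)\}$ and $\E_{\mathcal L}(*, k) = \{(k(*), k(*))\}$ and supply an argument (play/coplay conditions force the projection, which is non-injective for $|X| \geq 2$) for the parenthetical non-isomorphism claim, which the paper asserts without proof.
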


\begin{proof}
	The top-left and top-right games have sets of strategy profiles $X$ and $X \times X$.
	The functions $x \mapsto (x, x)$ and $(x_1, x_2) \mapsto x_2$ define morphisms of open games.
	
	The bottom-left and bottom-right games have sets of strategy profiles $X$ and $X \times X$.
	The functions $x \mapsto (x, x)$ and $(x_1, x_2) \mapsto x_1$ define morphisms of open games.
\end{proof}

The distinction between $\rhd_X, \lhd_X$ and the identities means that the white structures fail to be monoids and comonoids, in precisely the same way that $\Game$ fails to be compact closed.
However, this is the only condition that fails.

%

\begin{proposition}
	There are natural isomorphisms of open games
	\begin{center} \begin{tikzpicture}
		\node [circle, scale=.5, draw] (m1) at (0, .5) {}; \node [circle, scale=.5, draw] (m2) at (1, 0) {};
		\node (X1) at (-1, 1) {$X$}; \node (X2) at (-1, 0) {$X$}; \node (X3) at (-1, -1) {$X$}; \node (X4) at (2, 0) {$X$};
		\draw [->-] (X1) to [out=0, in=135] (m1); \draw [->-] (X2) to [out=0, in=-135] (m1);
		\draw [->-] (m1) to [out=0, in=135] (m2); \draw [->-] (X3) to [out=0, in=-135] (m2);
		\draw [->-] (m2) to (X4);
		\node at (4, 0) {$\cong$};
		\node [circle, scale=.5, draw] (m3) at (7, -.5) {}; \node [circle, scale=.5, draw] (m4) at (8, 0) {};
		\node (X5) at (6, 1) {$X$}; \node (X6) at (6, 0) {$X$}; \node (X7) at (6, -1) {$X$}; \node (X8) at (9, 0) {$X$};
		\draw [->-] (X5) to [out=0, in=135] (m4); \draw [->-] (X6) to [out=0, in=135] (m3);
		\draw [->-] (X7) to [out=0, in=-135] (m3); \draw [->-] (m3) to [out=0, in=-135] (m4);
		\draw [->-] (m4) to (X8);
	\end{tikzpicture} \end{center}
	and
	\begin{center} \begin{tikzpicture}
		\node [circle, scale=.5, draw] (m1) at (0, 0) {}; \node [circle, scale=.5, draw] (m2) at (1, .5) {};
		\node (X1) at (-1, 0) {$X$}; \node (X2) at (2, 1) {$X$}; \node (X3) at (2, 0) {$X$}; \node (X4) at (2, -1) {$X$};
		\draw [->-] (X2) to [out=180, in=45] (m2); \draw [->-] (X3) to [out=180, in=-45] (m2);
		\draw [->-] (m2) to [out=180, in=45] (m1); \draw [->-] (X4) to [out=180, in=-45] (m1);
		\draw [->-] (m1) to (X1);
		\node at (4, 0) {$\cong$};
		\node [circle, scale=.5, draw] (m3) at (7, 0) {}; \node [circle, scale=.5, draw] (m4) at (8, -.5) {};
		\node (X5) at (6, 0) {$X$}; \node (X6) at (9, 1) {$X$}; \node (X7) at (9, 0) {$X$}; \node (X8) at (9, -1) {$X$};
		\draw [->-] (X6) to [out=180, in=45] (m3); \draw [->-] (X7) to [out=180, in=45] (m4);
		\draw [->-] (X8) to [out=180, in=-45] (m4); \draw [->-] (m4) to [out=180, in=-45] (m3);
		\draw [->-] (m3) to (X5);
	\end{tikzpicture} \end{center}
\end{proposition}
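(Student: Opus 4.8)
The plan is to expand both composite open games explicitly using the definitions of $\circ$ and $\otimes$, read off their strategy sets, play, coplay and equilibrium functions, and then exhibit the evident bijection on strategy profiles as the required isomorphism.

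First I would compute the left-hand side of the first isomorphism, namely $\nabla_X^+ \circ (\nabla_X^+ \otimes \id_{X^+})$. By the composition formula its strategy set is $\Sigma(\nabla_X^+ \otimes \id_{X^+}) \times \Sigma(\nabla_X^+) \cong X \times X$; I write a profile as $(a, b)$, where $a$ is the strategy of the first multiplication node and $b$ that of the second. Since every object in sight is covariant, all coplay data is trivial and the backward continuation passed to the first layer in the composition equilibrium formula collapses to the trivial map $X \to 1$. The overall play function then outputs $b$ regardless of the inputs and of $a$, because $\P_{\nabla_X^+}(x', (x_1, x_2)) = x'$, and unwinding the equilibrium formula gives that $(a,b)$ is an equilibrium at input $(x_1, x_2, x_3)$ iff $x_1 = x_2$, $a = x_1$, $a = x_3$ and $b = a$; that is, iff $x_1 = x_2 = x_3$ and $a = b$ equals this common value. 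Computing the right-hand side $\nabla_X^+ \circ (\id_{X^+} \otimes \nabla_X^+)$ in the same way, with profile $(c,d)$, gives strategy set $X \times X$, play function outputting $d$, and equilibria exactly when $x_1 = x_2 = x_3$ with $c = d$ the common value.

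With both games computed, the isomorphism is the evident bijection $X \times X \to X \times X$ given by the canonical identifications, i.e. $(a,b) \mapsto (c,d) = (a,b)$. It respects play because in both games only the final node's strategy ($b$, resp. $d$) is observable in the output and these are identified; it respects coplay trivially; and it respects equilibria because, for every context, the two equilibrium sets coincide as the subset of $X \times X$ equal to $\{(x,x)\}$ when all three inputs equal $x$ and empty otherwise. Being a bijection compatible with play, coplay and equilibria in both directions, it is an isomorphism of open games, and since on strategy profiles it is the identity its naturality is immediate. The second isomorphism, coassociativity of $\nabla_X^-$, is proved by the formally dual argument: the roles of play and coplay are exchanged, the strategy sets are again $X \times X$, and the condition that all three inputs coincide is replaced by the condition that the relevant components of the continuation $k$ coincide.

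The main obstacle is purely the bookkeeping in the composition equilibrium formula: one must check that the continuation fed backwards into the first layer really is trivial, which is what makes the equilibrium condition collapse to the clean statement that all three wires carry the same value, and one must confirm that the off-equilibrium play of the two composites agrees, which is what forces the observable content to be exactly the final node's strategy and hence pins down the bijection. Neither step is difficult, but both must be done carefully because, as the paper stresses, equality of open games as $2$-cells requires agreement in all contexts, not merely in equilibrium.
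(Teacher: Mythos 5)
Your computation is correct: both composites have strategy set $X \times X$, play (resp.\ coplay) determined solely by the outer node's strategy, and equilibrium set $\{(x,x)\}$ exactly when all three wire values (resp.\ the three components of the continuation $k$) coincide at $x$, so the identity bijection is an isomorphism in every context. The paper states this proposition without proof, and your direct unwinding of the $\circ$ and $\otimes$ formulas is precisely the intended argument (the only blemish is a harmless type slip: the continuation fed to the first layer is the trivial map $X \times X \to 1 \times 1$, not $X \to 1$).
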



\begin{proposition}
	There are natural isomorphisms of open games
	\begin{center} \begin{tikzpicture}
		\node (X1) at (0, .5) {$X$}; \node (X2) at (0, -.5) {$X$}; \node (X3) at (3, 0) {$X$};
		\node (A1) [circle, scale=.5, draw] at (2, 0) {};
		\draw [->-] (X1) to [out=0, in=180] (1, -.5) to [out=0, in=-135] (A1); \draw [->-] (X2) to [out=0, in=180] (1, .5) to [out=0, in=135] (A1);
		\draw [->-] (A1) to (X3);
		\node at (5, 0) {$\cong$};
		\node (X4) at (7, .5) {$X$}; \node (X5) at (7, -.5) {$X$}; \node (X6) at (9.5, 0) {$X$};
		\node (A2) [circle, scale=.5, draw] at (8.5, 0) {};
		\draw [->-] (X4) to [out=0, in=135] (A2); \draw [->-] (X5) to [out=0, in=-135] (A2); \draw [->-] (A2) to (X6);
	\end{tikzpicture} \end{center}
	and
	\begin{center} \begin{tikzpicture}
		\node (X1) at (0, 0) {$X$}; \node (X2) at (3, .5) {$X$}; \node (X3) at (3, -.5) {$X$};
		\node (A1) [circle, scale=.5, draw] at (1, 0) {};
		\draw [->-] (X2) to [out=180, in=0] (2, -.5) to [out=180, in=-45] (A1); \draw [->-] (X3) to [out=180, in=0] (2, .5) to [out=180, in=45] (A1);
		\draw [->-] (A1) to (X1);
		\node at (5, 0) {$\cong$};
		\node (X4) at (7, 0) {$X$}; \node (X5) at (9.5, .5) {$X$}; \node (X6) at (9.5, -.5) {$X$};
		\node (A2) [circle, scale=.5, draw] at (8, 0) {};
		\draw [->-] (X5) to [out=180, in=45] (A2); \draw [->-] (X6) to [out=180, in=-45] (A2); \draw [->-] (A2) to (X4);
	\end{tikzpicture} \end{center}
\end{proposition}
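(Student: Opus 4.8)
The two displayed isomorphisms assert, respectively, that the white multiplication $\nabla_X^+$ is \emph{commutative} and that the white comultiplication $\nabla_X^-$ is \emph{cocommutative}. In each case the left-hand game is obtained from $\nabla_X^\pm$ by composing with the symmetry (the wire-crossing) on the doubled object: in the covariant picture the crossing sits on the input side, giving $\nabla_X^+ \circ \mathrm{swap}$, while in the contravariant picture it sits on the (backward) codomain side, giving $\mathrm{swap} \circ \nabla_X^-$. The plan is to exhibit the evident bijection on strategy profiles and verify that it respects play, coplay and equilibria in both directions. Since the symmetry is strategically trivial, composing with it does not enlarge the strategy set: in both cases the composite has strategy set $X \times 1 \cong X$, canonically identified with $\Sigma (\nabla_X^\pm) = X$, and I take this identification as the candidate isomorphism.

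For the covariant statement I would compute $\nabla_X^+ \circ \mathrm{swap}$ using the definition of sequential composition. The swap sends the forward input $(x_1, x_2)$ to $(x_2, x_1)$, after which $\P_{\nabla_X^+}$ returns the strategy $x'$; hence the play function of the composite is $(x', (x_1, x_2)) \mapsto x'$, agreeing with $\nabla_X^+$, and the coplay is trivial on both sides. The only real content is the equilibrium set. Running the composition formula, the swap feeds $(x_2, x_1)$ into $\E_{\nabla_X^+}$, so $x'$ is an equilibrium in context $((x_1, x_2), *)$ exactly when $x' \in \E_{\nabla_X^+} ((x_2, x_1), *)$, i.e. when $x_1 = x_2$ and $x' = x_2$. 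Because the defining condition ``$x_1 = x_2$'' and its associated output value are symmetric in the two arguments, this coincides with the condition $x_1 = x_2$, $x' = x_1$ defining $\E_{\nabla_X^+} ((x_1, x_2), *)$, so the two equilibrium sets agree under the identity bijection.

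The contravariant statement is dual, but the bookkeeping is slightly more delicate because the symmetry acts through its \emph{coplay} and therefore modifies the continuation rather than the forward input. Postcomposing with the swap and expanding $\C_{\H \circ \G}$ and $\E_{\H \circ \G}$, the effect of the swap is to replace the continuation $k : 1 \to X \times X$ by $k'$ with $k' (*) = (\pi_2 (k (*)), \pi_1 (k (*)))$. Consequently a strategy lies in $\E_{\mathrm{swap} \circ \nabla_X^-} (*, k)$ iff it lies in $\E_{\nabla_X^-} (*, k')$, i.e. iff $\pi_1 (k' (*)) = \pi_2 (k' (*))$, which unwinds to $\pi_2 (k (*)) = \pi_1 (k (*))$. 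This is again the symmetric diagonal condition, with value $\pi_2 (k (*)) = \pi_1 (k (*))$, so the equilibrium set equals $\E_{\nabla_X^-} (*, k)$; the coplay functions agree since each simply returns the strategy.

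Finally, naturality in $X$ is immediate, since the bijection is assembled from identity and symmetry components and the symmetry of $(\Game, \otimes)$ is natural, so the required squares commute. The main obstacle I anticipate is precisely the contravariant bookkeeping of the previous paragraph — correctly tracking how the swap's coplay transforms the continuation $k$ before it reaches $\E_{\nabla_X^-}$ — but once this is done, the symmetry of the diagonal equilibrium conditions makes both verifications routine.
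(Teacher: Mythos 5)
Your proposal is correct: the paper states this proposition without proof, treating it as a routine verification, and your argument is precisely the expected one --- the identity bijection on $\Sigma = X$ (via $1 \times X \cong X$, the swap being strategically trivial), with the only content being that the symmetric equilibrium conditions $x_1 = x_2$ and $\pi_1(k(*)) = \pi_2(k(*))$ are invariant under the swap acting on the forward input (covariant case) and on the continuation via coplay (contravariant case), respectively. Your handling of the contravariant bookkeeping, where the swap transforms $k$ into $k'(*) = (\pi_2(k(*)), \pi_1(k(*)))$ before it reaches $\E_{\nabla_X^-}$, matches how the paper's composition formula works and is exactly right.
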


Next we show that despite the white structures not being monoids and comonoids, the black and white structures interact as a lax bialgebra; see for example \cite{bonchi14} for a graphical presentation of the bialgebra axioms.

\begin{proposition}
	There are morphisms of open games
	\begin{center} \begin{tikzpicture}
		\node (w1) [circle, scale=.5, draw] at (0, 0) {}; \node (b1) [circle, scale=.5, fill=black, draw] at (1, 0) {};
		\node (X1) at (2, .5) {$X$}; \node (X2) at (2, -.5) {$X$};
		\draw [->-] (w1) to (b1); \draw [->-] (b1) to [out=45, in=180] (X1); \draw [->-] (b1) to [out=-45, in=180] (X2);
		\node at (3, 0) {$\implies$};
		\node (w2) [circle, scale=.5, draw] at (4, .5) {}; \node (w3) [circle, scale=.5, draw] at (4, -.5) {};
		\node (X3) at (5, .5) {$X$}; \node (X4) at (5, -.5) {$X$};
		\draw [->-] (w2) to (X3); \draw [->-] (w3) to (X4);
		\node (b1) [circle, scale=.5, fill=black, draw] at (7, 0) {}; \node (w1) [circle, scale=.5, draw] at (8, 0) {};
		\node (X1) at (9, .5) {$X$}; \node (X2) at (9, -.5) {$X$};
		\draw [->-] (X1) to [out=180, in=45] (w1); \draw [->-] (X2) to [out=180, in=-45] (w1); \draw [->-] (w1) to (b1);
		\node at (10, 0) {$\implies$};
		\node (b2) [circle, scale=.5, fill=black, draw] at (11, .5) {}; \node (b3) [circle, scale=.5, fill=black, draw] at (11, -.5) {};
		\node (X3) at (12, .5) {$X$}; \node (X4) at (12, -.5) {$X$};
		\draw [->-] (X3) to (b2); \draw [->-] (X4) to (b3);
	\end{tikzpicture} \end{center}
	\begin{center} \begin{tikzpicture}
		\node (b1) [circle, scale=.5, fill=black, draw] at (1, 0) {}; \node (w1) [circle, scale=.5, draw] at (2, 0) {};
		\node (X1) at (0, .5) {$X$}; \node (X2) at (0, -.5) {$X$};
		\draw [->-] (w1) to (b1); \draw [->-] (b1) to [out=135, in=0] (X1); \draw [->-] (b1) to [out=-135, in=0] (X2);
		\node at (3, 0) {$\implies$};
		\node (w2) [circle, scale=.5, draw] at (5, .5) {}; \node (w3) [circle, scale=.5, draw] at (5, -.5) {};
		\node (X3) at (4, .5) {$X$}; \node (X4) at (4, -.5) {$X$};
		\draw [->-] (w2) to (X3); \draw [->-] (w3) to (X4);
		\node (w1) [circle, scale=.5, draw] at (8, 0) {}; \node (b1) [circle, scale=.5, fill=black, draw] at (9, 0) {};
		\node (X1) at (7, .5) {$X$}; \node (X2) at (7, -.5) {$X$};
		\draw [->-] (X1) to [out=0, in=135] (w1); \draw [->-] (X2) to [out=0, in=-135] (w1); \draw [->-] (w1) to (b1);
		\node at (10, 0) {$\implies$};
		\node (b2) [circle, scale=.5, fill=black, draw] at (12, .5) {}; \node (b3) [circle, scale=.5, fill=black, draw] at (12, -.5) {};
		\node (X3) at (11, .5) {$X$}; \node (X4) at (11, -.5) {$X$};
		\draw [->-] (X3) to (b2); \draw [->-] (X4) to (b3);
	\end{tikzpicture} \end{center}
\end{proposition}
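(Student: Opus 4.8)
The plan is to treat all four claimed implications by a single recipe. A globular morphism $\alpha : \G \implies \H$ is just a function $\alpha : \Sigma(\G) \to \Sigma(\H)$ for which play and coplay agree on the nose and for which $\sigma \in \E_\G(x,k)$ implies $\alpha(\sigma) \in \E_\H(x,k)$. So for each implication I would first unfold the two open games on either side, using the composition and tensor definitions, to read off their strategy sets and their play, coplay and equilibrium data; then exhibit the function; then check the three conditions. In every case the source is built from a single white node (which contributes a factor $X$) together with black and unit nodes that are strategically trivial, so the left-hand game always has strategy set naturally isomorphic to $X$ (as $X \times 1$ or $1 \times X$), while the target is a monoidal product of two nodes of the same colour, with strategy set $X \times X$ or $1$.

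I would handle the two covariant implications first. For $\Delta_X^+ \circ \bangrot_X^+ \implies \bangrot_X^+ \otimes \bangrot_X^+$ the strategy sets are $X$ and $X \times X$; the play on the left sends $x$ to $(x,x)$ (the unit emits $x$, which the black copy duplicates), while on the right $(x_1,x_2)$ plays $(x_1,x_2)$, so the diagonal $x \mapsto (x,x)$ matches play, coplay is trivial on both sides, and since every profile of both games is an equilibrium in the relevant (trivial-continuation) contexts the equilibrium condition is immediate. For $!_X^+ \circ \nabla_X^+ \implies !_X^+ \otimes !_X^+$ the strategy sets are $X$ and $1$; play and coplay into $I$ are trivial, and the unique function $x \mapsto *$ works because the target is strategically trivial, so its only profile is an equilibrium in every context and in particular absorbs the (restricted) equilibria of the source, which exist only on the diagonal $x_1 = x_2$ forced by $\nabla_X^+$.

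The two contravariant implications are the mirror images of these under the symmetry exchanging play with coplay (and $\bangrot_X^+, \nabla_X^+, !_X^+$ with $\bangrot_X^-, \nabla_X^-, !_X^-$). Concretely, for $\bangrot_X^- \circ \Delta_X^- \implies \bangrot_X^- \otimes \bangrot_X^-$ the same diagonal $x \mapsto (x,x)$ works, now matching coplay rather than play (the counit's value is duplicated by the black multiplication), and for $\nabla_X^- \circ !_X^- \implies !_X^- \otimes !_X^-$ the unique function $x \mapsto *$ works, the target again being strategically trivial. So once the covariant pair is verified, the contravariant pair follows by the same bookkeeping with arrows reversed.

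The step needing most care is computing the equilibrium functions of the two composite games on the left, since the composition formula feeds the continuation through the coplay and play of the second factor before handing it to the first; I would check that after this unfolding the composite's equilibrium set in each context is exactly the one cut out by the white node's matching condition, so that the forward inclusion into the target holds. The complementary observation, which explains why these are genuinely \emph{lax} and not isomorphisms, is that the reverse functions fail: for the unit-duplication implications the product game can play or coplay an off-diagonal pair $(x_1,x_2)$ with $x_1 \neq x_2$ that no single-white-node profile reproduces, and for the counit implications the product game is an equilibrium in contexts where the white node is not. This is the same equilibrium-versus-off-equilibrium discrepancy already responsible for $\rhd_X \neq \id_{X^+}$ and $\lhd_X \neq \id_{X^-}$, so I would finally confirm for each of the four that the side carrying the white node is the source, guaranteeing the stated direction.
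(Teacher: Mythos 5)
Your proposal is correct and takes essentially the same approach as the paper: the paper disposes of all four implications with the single remark that each morphism is given by the copying function $X \to X \times X$ or the deleting function $X \to 1$ on strategy profiles, which are exactly the diagonal and terminal maps you exhibit and verify against the play, coplay and equilibrium data. Your closing remarks on why the reverse morphisms fail go beyond what the paper records here, but they are accurate and consistent with the proposition claiming morphisms rather than isomorphisms.
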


\begin{proposition}
	There are morphisms of open games
	\begin{center} \begin{tikzpicture}
		\node (w1) [circle, scale=.5, draw] at (0, 0) {}; \node (b1) [circle, scale=.5, fill=black, draw] at (1.5, 0) {};
		\draw [->-] (w1) to node [above] {$X$} (b1);
		\node at (2.5, 0) {$\implies$};
		\node [rectangle, scale=6, dashed, draw] at (4, 0) {};
		\node (b1) [circle, scale=.5, fill=black, draw] at (6.5, 0) {}; \node (w1) [circle, scale=.5, draw] at (8, 0) {};
		\draw [->-] (w1) to node [above] {$X$} (b1);
		\node at (9, 0) {$\implies$};
		\node [rectangle, scale=6, dashed, draw] at (10.5, 0) {};
	\end{tikzpicture} \end{center}
\end{proposition}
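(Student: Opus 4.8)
The plan is to recognise these two diagrams as the unit-versus-counit axiom of the (lax) bialgebra --- the compatibility between the white unit and the black counit --- and in each case to exhibit the unique function onto the singleton strategy set as the required morphism. Reading the left diagram from left to right identifies the composite as $!_X^+ \circ \bangrot_X^+ : I \to I$: the white node emits the forward wire as $\bangrot_X^+ : I \to X^+$, and the black node absorbs it as $!_X^+ : X^+ \to I$. In the right diagram the wire runs backwards, so the composite is the contravariant mirror $\bangrot_X^- \circ !_X^- : I \to I$, with $!_X^- : I \to X^-$ feeding into $\bangrot_X^- : X^- \to I$. In both cases the target of the claimed morphism is $\id_I$, whose strategy set is the singleton and which has $\E_{\id_I} (*, *) = \{ * \}$.

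First I would compute the strategy sets. By the definition of sequential composition, $\Sigma (!_X^+ \circ \bangrot_X^+) = \Sigma (\bangrot_X^+) \times \Sigma (!_X^+) = X \times 1$, and likewise $\Sigma (\bangrot_X^- \circ !_X^-) = 1 \times X$; both are naturally isomorphic to $X$. The candidate 2-cell is then, in each case, the unique function $X \to 1 = \Sigma (\id_I)$. Next I would verify the three morphism conditions. The play and coplay conditions are vacuous, since both composites and $\id_I$ have codomain $I$, so their play and coplay functions land in the singleton $1$ and agree automatically. For equilibrium preservation, I would observe that the unique strategy $*$ of $\id_I$ lies in $\E_{\id_I} (*, *)$ for the unique context; hence every strategy of the composite is sent to an equilibrium, and the condition holds \emph{without} having to compute the composite's own equilibrium set via the composition formula.

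Since the argument is entirely formal, there is no genuine obstacle; the only care needed is bookkeeping. The main point to get right is the orientation of the wires through the covariant/contravariant conventions, so as to confirm that the two diagrams really denote $!_X^+ \circ \bangrot_X^+$ and $\bangrot_X^- \circ !_X^-$, and to stress that the morphism runs \emph{from} the composite \emph{to} $\id_I$ and is not an isomorphism. This last observation is the conceptually important one: exactly as with $\rhd_X$, $\lhd_X$ and the identities, the composite coincides with $\id_I$ only in equilibrium, which is precisely why the black and white structures assemble into a lax bialgebra rather than a strict one.
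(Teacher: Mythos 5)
Your proposal is correct and takes essentially the same route as the paper: the paper's (implicit) proof is the remark that the morphism is given by the deleting function $X \to 1$ in $\mathbf{Set}$, which is exactly your unique map $\Sigma(!_X^+ \circ \bangrot_X^+) \cong X \to 1 = \Sigma(\id_I)$ (and dually $\Sigma(\bangrot_X^- \circ !_X^-) \cong X \to 1$), with play and coplay conditions vacuous over $I$ and equilibrium preservation automatic because $\E_{\id_I}(*,*) = \{*\}$. Your reading of the diagrams as $!_X^+ \circ \bangrot_X^+$ and $\bangrot_X^- \circ !_X^-$, and your remark that the 2-cell is not invertible (so the bialgebra unit--counit law holds only laxly), both match the paper.
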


\begin{proposition}
	There are morphisms of open games
	\begin{center} \begin{tikzpicture}
		\node (X1) at (0, .5) {$X$}; \node (X2) at (0, -.5) {$X$};
		\node (w1) [circle, scale=.5, draw] at (1, 0) {}; \node (b1) [circle, scale=.5, fill=black, draw] at (2, 0) {};
		\node (X3) at (3, .5) {$X$}; \node (X4) at (3, -.5) {$X$};
		\draw [->-] (X1) to [out=0, in=135] (w1); \draw [->-] (X2) to [out=0, in=-135] (w1); \draw [->-] (w1) to (b1);
		\draw [->-] (b1) to [out=45, in=180] (X3); \draw [->-] (b1) to [out=-45, in=180] (X4);
		\node at (5, 0) {$\implies$};
		\node (X5) at (7, .5) {$X$}; \node (X6) at (7, -.5) {$X$};
		\node (b2) [circle, scale=.5, fill=black, draw] at (8, .5) {}; \node (b3) [circle, scale=.5, fill=black, draw] at (8, -.5) {};
		\node (w2) [circle, scale=.5, draw] at (9.5, .5) {}; \node (w3) [circle, scale=.5, draw] at (9.5, -.5) {};
		\node (X7) at (10.5, .5) {$X$}; \node (X8) at (10.5, -.5) {$X$};
		\draw [->-] (X5) to (b2); \draw [->-] (X6) to (b3);
		\draw [->-] (b2) to [out=45, in=135] (w2); \draw [-] (b2) to [out=-45, in=135] (w3);
		\draw [-] (b3) to [out=45, in=-135] (w2); \draw [->-] (b3) to [out=-45, in=-135] (w3);
		\draw [->-] (w2) to (X7); \draw [->-] (w3) to (X8);
	\end{tikzpicture} \end{center}
	and
	\begin{center} \begin{tikzpicture}
		\node (X1) at (0, .5) {$X$}; \node (X2) at (0, -.5) {$X$};
		\node (b1) [circle, scale=.5, fill=black, draw] at (1, 0) {}; \node (w1) [circle, scale=.5, draw] at (2, 0) {};
		\node (X3) at (3, .5) {$X$}; \node (X4) at (3, -.5) {$X$};
		\draw [->-] (X3) to [out=180, in=45] (w1); \draw [->-] (X4) to [out=180, in=-45] (w1); \draw [->-] (w1) to (b1);
		\draw [->-] (b1) to [out=135, in=0] (X1); \draw [->-] (b1) to [out=-135, in=0] (X2);
		\node at (5, 0) {$\implies$};
		\node (X5) at (7, .5) {$X$}; \node (X6) at (7, -.5) {$X$};
		\node (w2) [circle, scale=.5, draw] at (8, .5) {}; \node (w3) [circle, scale=.5, draw] at (8, -.5) {};
		\node (b2) [circle, scale=.5, fill=black, draw] at (9.5, .5) {}; \node (b3) [circle, scale=.5, fill=black, draw] at (9.5, -.5) {};
		\node (X7) at (10.5, .5) {$X$}; \node (X8) at (10.5, -.5) {$X$};
		\draw [->-] (X7) to (b2); \draw [->-] (X8) to (b3);
		\draw [->-] (b2) to [out=135, in=45] (w2); \draw [-] (b2) to [out=-135, in=45] (w3);
		\draw [-] (b3) to [out=135, in=-45] (w2); \draw [->-] (b3) to [out=-135, in=-45] (w3);
		\draw [->-] (w2) to (X5); \draw [->-] (w3) to (X6);
	\end{tikzpicture} \end{center}
\end{proposition}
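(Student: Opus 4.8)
The plan is to exhibit, in each case, an explicit function on strategy profiles and verify the three defining conditions of a (globular) morphism of open games. For both diagrams the relevant map will turn out to be the diagonal $X \to X \times X$, $x \mapsto (x, x)$, and the asserted morphism points from the ``multiply-then-copy'' side to the ``copy-then-cross-then-multiply'' side.

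First I would compute both sides of the covariant (first) diagram concretely. Unwinding the definitions of $\circ$ and $\otimes$, the left-hand game $\Delta_X^+ \circ \nabla_X^+$ has strategy set $\Sigma \cong X$, play function $x' \mapsto (x', x')$ on every forward input, and---since the black copy is strategically trivial with trivial coplay---equilibrium set $\{x'\}$ exactly when $x_1 = x_2 = x'$, and $\varnothing$ otherwise. The right-hand game $(\nabla_X^+ \otimes \nabla_X^+) \circ (\text{crossing}) \circ (\Delta_X^+ \otimes \Delta_X^+)$ has strategy set $\Sigma \cong X \times X$, one choice per white node, and play function $(x'_1, x'_2) \mapsto (x'_1, x'_2)$. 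The key observation is that after the two black nodes copy their inputs and the wires cross, \emph{both} white nodes receive the \emph{same} pair $(x_1, x_2)$; hence each imposes the matching constraint $x_1 = x_2$ and then selects its first argument, so the equilibrium set is $\{(x_1, x_1)\}$ when $x_1 = x_2$ and $\varnothing$ otherwise.

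With both sides in hand, I would check that the diagonal $\alpha : x' \mapsto (x', x')$ is a morphism $\Delta_X^+ \circ \nabla_X^+ \implies (\nabla_X^+ \otimes \nabla_X^+) \circ (\text{crossing}) \circ (\Delta_X^+ \otimes \Delta_X^+)$. The play condition holds because the left game plays $(x', x')$ while the right game, evaluated at $\alpha(x') = (x', x')$, also plays $(x', x')$; the coplay condition is vacuous since all backward types are $1$; and the equilibrium condition holds because membership of $x'$ in the left equilibrium set forces $x' = x_1 = x_2$, whence $\alpha(x') = (x_1, x_1)$ lies in the right equilibrium set. The second (contravariant) diagram is then handled by the exact dual argument: the map is again the diagonal on strategy profiles, the roles of $\P$ and $\C$ are exchanged, and the shared matching constraint is now read off from the common continuation $k$ fed into the two copies of $\nabla_X^-$, which coincide because the black nodes $\Delta_X^-$ and the crossing route the same $k$ to both.

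I expect the main obstacle to be bookkeeping rather than anything conceptual: correctly threading the equilibrium sets through the nested composition and tensor formulas, and in particular confirming that the copy-and-cross wiring really does deliver the same forward input (respectively the same continuation) to both white nodes, so that the two matching constraints collapse to the single condition $x_1 = x_2$. This is precisely what lets $\alpha$ land in the (strictly smaller) equilibrium set on the bialgebra side. The simplification that covariant objects carry trivial coplay, and contravariant objects trivial play, is what prevents any genuine feedback through the coplay channel and keeps the equilibrium computation from becoming circular.
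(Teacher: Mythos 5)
Your proposal is correct and takes essentially the same route as the paper, which disposes of this proposition with the single remark that the morphism is given by the copying function $X \to X \times X$ in the category of sets --- exactly your diagonal $x' \mapsto (x', x')$ from the one-white-node side (strategy set $X$) to the two-white-node side (strategy set $X \times X$). Your explicit unwinding of the play, coplay, and equilibrium conditions, including the observation that the copy-and-cross wiring feeds the same input pair (respectively the same continuation) to both white nodes so that the matching constraints coincide, is just the verification the paper leaves implicit.
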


In each of the previous propositions, the morphism is given by the copying function $X \to X \times X$ or deleting function $X \to 1$ in the category of sets.

\bibliographystyle{plainurl}
\bibliography{prediction-postselection}

\end{document}